\def\<#1>{\langle#1\rangle}
\def\eatspace#1{#1}
\def\step#1#2{%
  \par\kern1pt\dimen144=#2em\advance\dimen144by1.67em
  \hangindent=\dimen144\hangafter=1
  \leavevmode\rlap{\small#1}\kern\dimen144\relax\eatspace}
\newcommand{\N}{\mathbb{N}}
\newcommand{\Z}{\mathbb{Z}}
\newcommand{\rank}{\operatorname{rank}}
\renewcommand{\Gamma}{\varGamma}
\newcommand{\A}{\mathbf{A}}
\newcommand{\B}{\mathbf{B}}
\def\C{\mathbf{C}}
\newcommand{\gl}{\operatorname{GL}}
\newcommand{\M}{\mathcal{M}}
\renewcommand{\l}{\ell}
\newtheorem{theorem}{Theorem}
\newtheorem{lemma}[theorem]{Lemma}
\newtheorem{definition}[theorem]{Definition}
\newtheorem{algorithm}{Algorithm}
\begin{document}

 \title{Flip Graphs with Symmetry and New Matrix Multiplication Schemes}

 \author[J. Moosbauer]{Jakob Moosbauer$^{\ast}$}
 \address{Jakob Moosbauer, Department of Computer Science, University of Warwick,
   Coventry, United Kingdom}
 \email{jakob.moosbauer@warwick.ac.uk}
 \author[M. J. Poole]{Michael Poole}
 \address{Michael J. Poole, Independent Researcher, United Kingdom}
  \email{mikepoole73@googlemail.com}

  \thanks{$\ast$ Jakob Moosbauer was supported by the EPSRC grant EP/W014882/2}

 \begin{abstract}
   The flip graph algorithm is a method for discovering new matrix
   multiplication schemes by following random walks on a graph.  We introduce a
   version of the flip graph algorithm for matrix multiplication schemes that
   admit certain symmetries.  This significantly reduces the size of the search
   space, allowing for more efficient exploration of the flip graph.  The
   symmetry in the resulting schemes also facilitates the process of lifting
   solutions from $F_{2}$ to $\mathbb{Z}$.  Our results are new schemes for
   multiplying $5\times 5$ matrices using $93$ multiplications and $6\times 6$
   matrices using $153$ multiplications over arbitrary ground fields.
 \end{abstract}

 \maketitle

 \section{Introduction}\label{sec:introduction}

 Matrix multiplication lies at the heart of many problems in scientific
 computing and theoretical computer science. Since Strassen’s seminal work~\cite{st:gein}
 demonstrating that $2 \times 2$ matrices can be multiplied using only seven
 scalar multiplications, researchers have sought more efficient algorithms both
 for small matrices and in the asymptotic regime.  Early improvements for small
 matrices include the work by Hopcroft and Kerr \cite{hk:omtn} for certain
 rectangular formats and Laderman’s algorithm \cite{la:anaf} for $3 \times 3$
 matrices. More recently, computer-assisted methods like numerical
 optimization~\cite{sm:tbca}, SAT solvers~\cite{hks:lsff,hks:nwtm}, reinforcement
 learning~\cite{fbh+:dfmm} and stochastic search methods~\cite{km:fgfm} have
 further expanded the catalog of known small matrix multiplication schemes.

 In parallel, a separate line of research has produced asymptotically fast
 algorithms that focus on minimizing the exponent $\omega$ of matrix
 multiplication.  Building on the groundbreaking work of Coppersmith and
 Winograd~\cite{cw:mmva} there has been a series of improvements leading to the
 current record of $\omega<2.371339$~\cite{adw+:mayf}.  While these algorithms
 provide the best asymptotic complexity, they only achieve lower exponents for
 matrices of astronomically large sizes.  

 Over the past decade, some authors worked on analyzing symmetries in matrix
 multiplication algorithms.  Ballard et al.~\cite{bilr:tgo2} combine numerical
 methods with symmetry constraints to find $3\times 3$ matrix multiplication
 schemes that have certain symmetries.  Grochow and Moore~\cite{gm:mmaf}
 describe an algorithm for multiplying $n\times n$ matrices in $n^{3}-n+1$
 multiplications based on symmetries of the matrix multiplication tensor.
 In~\cite{bu:somm} Burichenko analyses the symmetry groups of several known
 matrix multiplication algorithms and in~\cite{bu:oago} he proves a $23$
 multiplications lower bound for $3\times 3$ multiplication schemes that admit
 certain symmetries.

 A promising approach for exploring the space of small matrix multiplication
 schemes is provided by the flip graph algorithm \cite{km:fgfm}. In this
 framework, vertices correspond to distinct multiplication schemes and edges
 represent elementary transformations (``flips'') that modify a small number of
 rank-one tensors while preserving the overall rank. Certain moves, called
 ``reductions'', lower the rank and thus offer a path to improved
 algorithms. Random walks on the flip graph have already yielded new schemes in
 recent work~\cite{km:fgfm,km:snnm,aih:afga}.

 In this paper, we extend the flip graph approach by incorporating symmetry
 constraints into the search.  Building on the observation that many known
 schemes admit certain symmetry groups, we restrict our search to symmetric
 schemes.  This restriction dramatically reduces the size of the search space
 and facilitates the subsequent process of lifting solutions from the two
 element field to arbitrary ground fields.  In particular, our symmetry-based
 algorithm leads to new multiplication schemes for $5 \times 5$ matrices using
 93 multiplications and for $6 \times 6$ matrices using 153 multiplications,
 which improve upon previously known bounds $97$~\cite{km:fgfm} (or $94$ over
 $F_{2}$)~\cite{aih:afga} and $160$~\cite{sm:tbca}.

 Small matrix multiplication schemes also serve as building blocks for larger
 matrix multiplication schemes, as demonstrated by work of Drevet et
 al.\cite{dis:otfs} and Sedoglavic~\cite{se:anca,se:yaco}.  The previously best
 bound for $6\times 6$ matrices was achieved by combining four schemes for
 multiplying $3\times 3$ matrices with $3 \times 6$ matrices in $40$
 multiplications.  In contrast, our scheme for $6\times 6$ matrices is
 ``atomic'', in the sense that it does not originate from combining smaller
 schemes.

 \section{Matrix Multiplication and its Symmetry Group}
 \label{sec:matr-mult}

 For now, let $K$ be an arbitrary field and let $R$ be a $K$-algebra.  Let
 $\A,\B\in R^{n\times n}$.  The asymptotic complexity of computing the matrix
 product $\C=\A\B$ is, up to subpolynomial factors, determined by the number of
 multiplications needed to compute the entries of $\C$ without using
 commutativity.  This number is called the rank of $n\times n$ matrix
 multiplication.

 For example, Strassen's algorithm demonstrates that the rank of $2 \times 2$
 matrix multiplication is at most $7$.  Suppose
 \begin{equation*}
   \A = \begin{pmatrix}
         a_{1,1}&a_{1,2}\\
         a_{2,1}&a_{2,2}\\
       \end{pmatrix},
   \quad
   \B = \begin{pmatrix}
         b_{1,1}&b_{1,2}\\
         b_{2,1}&b_{2,2}
       \end{pmatrix}
   \quad\text{and}\quad
   \C = \begin{pmatrix}
         c_{1,1}&c_{1,2}\\
         c_{2,1}&c_{2,2}
       \end{pmatrix}.
     \end{equation*}

     Then Strassen's algorithm computes:
\begin{alignat*}{4}
   m_1 &= (a_{1,1} + a_{2,2}) (b_{1,1} + b_{2,2}) &\quad\smash{\raisebox{-6.4\baselineskip}{\rule\fboxrule{7\baselineskip}}}\quad c_{1,1} &= m_1 + m_4 + m_5 - m_6 \\
   m_2 &= (a_{1,1}) (b_{1,2} - b_{2,2}) & c_{1,2} &= m_2 + m_6 \\
   m_3 &= (a_{2,1} + a_{2,2}) (b_{1,1}) & c_{2,1} &= m_3 + m_5\\
   m_4 &= (a_{1,2} - a_{2,2}) (b_{2,1} + b_{2,2}) & c_{2,2} &= m_1 + m_2 - m_3 + m_7.\\
   m_5 &= (a_{2,2}) (b_{2,1} - b_{1,1}) \\ 
   m_6 &= (a_{1,1} + a_{1,2})(b_{2,2})\\
   m_7 &= (a_{2,1} - a_{1,1}) (b_{1,1}+ b_{1,2})
\end{alignat*}

Like any bilinear operation, matrix multiplication can be encoded in the
language of tensors.  The rank of matrix multiplication is then equal to the
tensor rank of the matrix multiplication tensor.

  \begin{definition}
   For $n\in \N$ we define the matrix multiplication tensor
   \begin{equation*} 
     \M_{n} = \sum_{i,j,k=1}^{n}E_{i,j}\otimes E_{j,k} \otimes E_{k,i} \in  (K^{n\times n})^{\otimes 3},
   \end{equation*}
   where $E_{i,j}$ stands for a matrix with a $1$ at position $(i,j)$ and zeros
   elsewhere.

   A tensor $T\in (K^{n\times n})^{\otimes 3}$ has rank one if $T=A\otimes B
   \otimes C$ for some $A,B,C \in K^{n\times n}$. The rank of a tensor is the
   smallest number $r$ such that $T$ can be written as the sum of $r$ rank-one
   tensors.

   An $n\times n$ matrix multiplication scheme of rank $r$ is a set of $r$
   rank-one tensors that sum to $\M_{n}$.
 \end{definition}

 Instead of $E_{i,j}\otimes E_{j',k'} \otimes E_{k'',i''}$ we from now on write
 $a_{i,j}\otimes b_{j',k'} \otimes c_{k'',i''}$ for easier readability and to
 highlight the connection to matrix multiplication.

 Strassen's algorithm corresponds to the following decomposition of the $2\times
 2$ matrix multiplication tensor into $7$ rank-one tensors:
 \begin{equation*}\label{eq:strassen}
   \begin{alignedat}{2}
     \M_{2} &= (a_{1,1} + a_{2,2}) \otimes (b_{1,1} + b_{2,2}) \otimes (c_{1,1}+c_{2,2})\\
     &+(a_{1,1}) \otimes (b_{1,2} - b_{2,2}) \otimes (c_{2,1}+c_{2,2})\\
     &+(a_{2,1} + a_{2,2}) \otimes (b_{1,1}) \otimes (c_{1,2}-c_{2,2})\\
     &+(a_{1,2} - a_{2,2}) \otimes (b_{2,1} + b_{2,2}) \otimes (c_{1,1})\\
     &+(a_{2,2}) \otimes (b_{2,1} - b_{1,1}) \otimes (c_{1,1}+c_{1,2})\\
     &+(a_{1,1} + a_{1,2}) \otimes (b_{2,2}) \otimes (c_{2,1}-c_{1,1})\\
     &+(a_{2,1} - a_{1,1}) \otimes (b_{1,1}+ b_{1,2}) \otimes (c_{2,2}).
   \end{alignedat}
 \end{equation*}

 The first two factors in each summand encode the multiplications $m_{1},\ldots
 m_{7}$ in the algorithm and the third factor how they contribute to each
 $c_{i,j}$.  In the tensor notation the indices in the $c$ variables are
 swapped, this makes the symmetries of the algorithm more apparent.  We can
 observe that a cyclic permutation of the factors leaves the first summand
 invariant. Also the remaining summands can be grouped into two blocks of three
 that both are invariant under cyclic permutation of the factors.  If we view
 the factors as matrices, then these two blocks are related by reversing the
 rows and columns of each matrix.  These symmetries can be expressed by the
 action of a symmetry group.

 The matrix multiplication tensor itself has a much larger symmetry group.  De
 Groote~\cite{dg:ovo1} determined this symmetry group and showed that Strassen's
 algorithm is unique up to the action of this group~\cite{dg:ovo2}.  We consider
 the actions of the two groups $S_{3}$ and $\gl_{n}^{\times 3}$ on the space
 $(K^{n\times n})^{\otimes 3}$.  The group $S_{3}$ acts by permuting the three
 factors and transposing them if the permutation is odd.  An element $(U,V,W)
 \in \gl(n)^{\times 3}$ acts on a rank-one tensor $A\otimes B\otimes C$ by the
 so-called sandwiching action $(U,V,W)\cdot A\otimes B\otimes C = UAV^{-1}
 \otimes VBW^{-1} \otimes WCU^{-1}$.  

 To write these actions as a single group action we need to form a semidirect
 product $S_{3}\rtimes \gl(n)^{\times 3}$. Strictly speaking, the symmetry group
 of matrix multiplication is a subgroup of this group, since the action of
 $\gl(n)^{\times 3}$ is not faithful on $(K^{n\times n})^{\otimes 3}$.  However,
 we still refer to it as the symmetry group of matrix multiplication.  Two
 algorithms are considered equivalent if they are related by the action of the
 symmetry group.

 \begin{definition}
 Let $G$ be a subgroup of the symmetry group of matrix multiplication.  A matrix
 multiplication scheme $S$ is called $G$-invariant if for all $g\in G$ we have
 $g\cdot S=S$.
 \end{definition}

 A $G$-invariant matrix multiplication scheme $S$ can be partitioned into orbits
 under the action of $G$.  We can write $S$ as
 \begin{equation*}
   S=\bigcup\{G \cdot t | t\in T\},
 \end{equation*}
 where $T$ is some set of representatives for the orbits of $G$ in $S$.
 
 In the present paper we consider two subgroups of the symmetry group.  All
 decompositions will be invariant under the action of $C_{3} \leq S_{3}$ that
 cyclically permutes the factors.  We also use a $\Z_{2}$ action that reverses
 the rows and columns of every factor matrix (this is conjugation with a
 backwards identity in every factor, and thus lies in $\gl(n)^{\times 3}$).
 Since we apply the same $\gl(n)$ action in each factor, these two actions
 commute and therefore can be combined into an action of the direct product
 $C_{3}\times \Z_{2}$.

 This group action describes the symmetries we observed in Strassen's algorithm
 above.  Thus, Strassen's algorithm is $C_{3}\times \Z_{2}$-invariant and the
 corresponding decomposition can be written as
 \begin{equation*}\label{eq:strassen-sym}
   \begin{alignedat}{2}
     \M_{2} &= (a_{1,1}+a_{2,2})\otimes(b_{1,1}+b_{2,2})\otimes(c_{1,1}+c_{2,2})\\
            &+ C_{3}\times\Z_{2}\cdot (a_{1,1}) \otimes (b_{1,2}-b_{2,2})
              \otimes (c_{2,1}+c_{2,2}).
   \end{alignedat}
 \end{equation*}
 Here the second summand stands for the sum of all orbit elements.

\section{Flip Graph}
\label{sec:flip-graph}
The flip graph algorithm introduced by Kauers and Moosbauer~\cite{km:fgfm} is an
effective method to find upper bounds on the rank of matrix multiplication
tensors~\cite{km:fgfm,km:snnm,aih:afga}.  The main concept is to follow random
walks on a graph whose vertices are matrix multiplication schemes and whose edges
are transformations between them.  Three kinds of transformations have been
introduced: flips, reductions and plus-transitions.  Reductions decrease the
rank of a matrix multiplication scheme, therefore the goal is to find as many
reduction edges as possible.  Flips transform a scheme into another scheme of
the same rank, they are used to explore the graph for reductions.
Plus-transitions increase the rank of a matrix multiplication scheme, they were
introduced by Arai, Ichikawa and Hukushima~\cite{aih:afga} to escape from
components of the flip graph that contain no reductions.

Every transformation replaces two rank-one tensors by one, two, or three
different rank-one tensors, without changing the rest of the tensor.

\begin{definition}
  Let $n\in \N$ and let $S$ be an $n\times n$ matrix multiplication scheme.  Let
  $A\otimes B \otimes C, A'\otimes B'\otimes C'\in S$.

  If $A=A'$ then we call a scheme of the form
  \begin{align*}
    S' &= S \setminus \{A\otimes B\otimes C, A'\otimes B'\otimes C'\}
       \cup \{A\otimes B \otimes (C+C'), A'\otimes (B'-B) \otimes C'\}.
  \end{align*}
  a flip of $S$.
  
  If $A=A'$ and $B=B'$ then we call a scheme of the form
  \begin{align*}
    S' &= S \setminus \{A\otimes B\otimes C, A'\otimes B'\otimes C'\}
       \cup \{A\otimes B \otimes (C+C')\}.
  \end{align*}
  a reduction of $S$.

  We call a scheme of the form
  \begin{align*}
    S' &= S \setminus \{A\otimes B\otimes C, A'\otimes B'\otimes C'\}
       \cup \{(A-A')\otimes B \otimes C, A'\otimes B \otimes (C+C'), A'\otimes (B'-B)\otimes C'\}.
  \end{align*}
  a plus-transition of $S$.

 These definitions apply for arbitrary permutations of $A,B$ and $C$.
\end{definition}

Note that if $S'$ is a flip of $S$ then $S$ is a flip of $S'$ as well.  It is
straightforward to see that flips and reductions of matrix multiplication
schemes are matrix multiplication schemes again.  To show the same for
plus-transitions, we can write them as an inverse reduction followed by a flip:
\begin{align*}
    &A\otimes B\otimes C + A'\otimes B'\otimes C'\\
    &= (A-A')\otimes B\otimes C + A'\otimes B\otimes C + A'\otimes B'\otimes C'\\
    &= (A-A')\otimes B\otimes C + A'\otimes B\otimes (C+C') + A'\otimes (B'-B)\otimes C'.
\end{align*}
  
\emph{Remark.} The original definition of a reduction in~\cite{km:fgfm} is more
general, however every such general reduction can be expressed as a sequence of flips
followed by a reduction defined as here.  In our experiments, we have found that
those general reductions occur extremely rarely; thus, we do not specifically
search for them.

Our new search algorithm is based on the property that we can apply flips,
reductions and plus-transitions to two orbits instead of two single rank-one
tensors, which preserves the symmetry of the scheme.
\begin{theorem}\label{thm:orbit-flip}
  Let $n\in \N$ and let $G\leq S_{3}\rtimes \gl(n)^{\times 3}$ be a finite
  subgroup of the symmetry group of matrix multiplication.  Let $S$ be a
  $G$-invariant $n\times n$ matrix multiplication scheme and let $T$ be a set of
  representatives for the orbits of $G$ in $S$.  Let $A\otimes B\otimes C,
  A\otimes B'\otimes C'$ be elements in $T$ with orbits of size $|G|$. Then
  there are $\lambda,\lambda' \in \Z$ such that
  \begin{align*}
    &S' = S\setminus (G\cdot A\otimes B \otimes C \cup G\cdot A\otimes B' \otimes C')
    \cup G\cdot \lambda A\otimes B \otimes (C+C') \cup G\cdot \lambda' A \otimes (B'-B) \otimes C'
  \end{align*}
  is a $G$-invariant $n\times n$ matrix multiplication scheme.

  We call $S'$ an \emph{orbit flip} of $S$.
\end{theorem}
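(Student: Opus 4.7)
The plan is to derive the orbit flip from the ordinary flip by a simple symmetrization: apply the flip identity to the representatives, act by every $g\in G$, sum, and then convert the $G$-sums into orbit sums using the orbit--stabilizer theorem. The coefficients $\lambda,\lambda'$ will fall out of this accounting as the orders of the stabilizers of the two new tensors.

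Write $X_{1} := A\otimes B\otimes C$, $X_{2}:=A\otimes B'\otimes C'$, $Y_{1}:=A\otimes B\otimes (C+C')$, and $Y_{2}:=A\otimes (B'-B)\otimes C'$. The ordinary flip gives the tensor identity $X_{1}+X_{2}=Y_{1}+Y_{2}$ in $(K^{n\times n})^{\otimes 3}$. Since every $g\in G \leq S_{3}\rtimes\gl(n)^{\times 3}$ acts by a linear map (a factor permutation, with transposition for odd elements, composed with sandwiching), applying $g$ to both sides yields $g\cdot X_{1}+g\cdot X_{2}=g\cdot Y_{1}+g\cdot Y_{2}$. Summing over $g\in G$ gives
\begin{equation*}
  \sum_{g\in G} g\cdot X_{1} + \sum_{g\in G} g\cdot X_{2} \;=\; \sum_{g\in G} g\cdot Y_{1} + \sum_{g\in G} g\cdot Y_{2}.
\end{equation*}
By the hypothesis that the orbits of $X_{1}$ and $X_{2}$ have size $|G|$, their $G$-stabilizers are trivial, so the left-hand side equals the plain orbit sums $\sum_{Z\in G\cdot X_{1}} Z + \sum_{Z\in G\cdot X_{2}} Z$. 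For $i\in\{1,2\}$, write $k_{i}:=|\mathrm{Stab}_{G}(Y_{i})|$; by orbit--stabilizer one has $\sum_{g\in G} g\cdot Y_{i}=k_{i}\sum_{Z\in G\cdot Y_{i}} Z$.

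Setting $\lambda:=k_{1}$ and $\lambda':=k_{2}$ (positive integers, as $G$ is finite) and using that scalar multiplication of a single tensor factor commutes with the $G$-action, the orbit sum of $\lambda Y_{i}$ equals $\lambda$ times that of $Y_{i}$, so it matches $\sum_{g\in G} g\cdot Y_{i}$. Combining the identities, the total of tensors removed from $S$ equals the total of tensors added, and hence $\sum_{t\in S'} t = \sum_{t\in S} t = \M_{n}$; the set $S'$ is $G$-invariant because it is, by construction, a union of full $G$-orbits. The main delicate point will be the stabilizer bookkeeping: one must verify that scalar multiplication really commutes with the odd elements of $S_{3}$ that transpose the factors, and handle the corner cases in which some $Y_{i}$ vanishes (the corresponding orbit is simply omitted, yielding a reduction in disguise) or the new orbit $G\cdot \lambda Y_{i}$ coincides set-theoretically with tensors already present in $S\setminus(G\cdot X_{1}\cup G\cdot X_{2})$; in all such cases the tensor-sum identity still holds and $S'$ remains a valid, possibly lower-rank, $G$-invariant scheme.
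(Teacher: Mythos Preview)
Your approach is essentially the paper's: start from the flip identity $X_1+X_2=Y_1+Y_2$, apply each $g\in G$ (the paper verifies equivariance by treating the $S_3$-part and the $\gl(n)^{\times 3}$-part separately, you invoke linearity of the action in one line), sum over $G$, and convert $G$-sums to orbit sums via orbit--stabilizer, reading off $\lambda=|G|/|G\cdot Y_1|$ and $\lambda'=|G|/|G\cdot Y_2|$.

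There is one corner case you do not cover that the paper handles explicitly: the two \emph{new} orbits may coincide, $G\cdot Y_1=G\cdot Y_2$. Your list of degeneracies mentions $Y_i=0$ and collisions of a new orbit with tensors already in $S\setminus(G\cdot X_1\cup G\cdot X_2)$, but not this one. In that situation the set-theoretic union $G\cdot\lambda Y_1\cup G\cdot\lambda' Y_2$ contributes only a single copy of the orbit, so your tensor-sum bookkeeping is off by one orbit-sum and $S'$ would fail to sum to $\M_n$. The paper's fix is to replace $\lambda$ by $2\lambda$ and set $\lambda'=0$ in this case; with that adjustment your argument goes through.
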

\begin{proof}
  We first show that for all $g\in G$ we have
  \begin{align*}
    &g\cdot A\otimes B\otimes C + g\cdot A\otimes B'\otimes C'
    = g\cdot A\otimes B\otimes (C+C') + g\cdot A\otimes (B'-B)\otimes C'.
  \end{align*}
  Since we can write $g$ as a product of the form
  \begin{equation*}
    g=(\mathbb{1},U,V,W)(\pi,I_{n},I_{n},I_{n}),
  \end{equation*}
  we consider the two actions separately. Note that the identity
  \begin{align*}
    &A\otimes B \otimes C + A\otimes B' \otimes C'
    = A \otimes B \otimes (C+C') +
      A\otimes (B'-B) \otimes C'
  \end{align*}
  still holds if we apply the same permutation to every rank-one tensor.  Since
  $(X\pm Y)^{T}=X^{T}\pm Y^{T}$, the identity also holds if we transpose every
  factor.

  For the second action we have
  \begin{align*}
    &(U,V,W)\cdot A\otimes B \otimes C + (U,V,W)\cdot A\otimes B' \otimes C'\\
    &=UAV^{-1}\otimes VBW^{-1}\otimes WCU^{-1} + UAV^{-1}\otimes VB'W^{-1}\otimes WC'U^{-1} \\
    &=UAV^{-1}\otimes VBW^{-1}\otimes (WCU^{-1}+WC'U^{-1})\\
    &+ UAV^{-1}\otimes (VB'W^{-1}-VBW^{-1})\otimes WC'U^{-1}\\
    &=UAV^{-1}\otimes VBW^{-1}\otimes W(C+C')U^{-1}\\
    &+ UAV^{-1}\otimes V(B'-B)W^{-1}\otimes WC'U^{-1}\\
    &= (U,V,W)\cdot A\otimes B \otimes (C+C') + (U,V,W)\cdot A\otimes (B'-B) \otimes C'.
  \end{align*}

  Since both $A\otimes B \otimes C$ and $A\otimes B' \otimes C'$ have full
  orbits, we get
  \begin{align*}
    &G\cdot A\otimes B\otimes C + G\cdot A\otimes B'\otimes C'\\
    &= \sum_{g\in G}(g \cdot A\otimes B\otimes C + g\cdot A\otimes B'\otimes C')\\
    &= \sum_{g\in G}(g \cdot A\otimes B\otimes (C+C') + g\cdot A\otimes (B'-B)\otimes C').
  \end{align*}
   
  The orbits of $A\otimes B\otimes (C+C')$ and $A\otimes B'-B)\otimes C'$ need
  not have size $|G|$.  Thus, we set
  \begin{equation*}
  \lambda=\frac{|G|}{|G\cdot A\otimes
    B\otimes (C+C')|} , \lambda'=\frac{|G|}{|G\cdot A\otimes
    (B'-B)\otimes C'|}.
  \end{equation*}
  The orbit-stabilizer theorem guarantees that $\lambda$
  is an integer and that
  \begin{equation*}
  \sum_{g\in G} g\cdot A\otimes B \otimes
  (C+C') = \lambda \sum\Big(G \cdot A\otimes B \otimes (C+C')\Big).
  \end{equation*}
  The same holds for $\lambda'$ and $A\otimes (B'-B) \otimes C$.  A special case
  occurs if $G\cdot A\otimes B \otimes (C+C')=G\cdot A \otimes (B'-B)\otimes C$.
  In this case we double $\lambda$ and set $\lambda'$ to $0$.  Finally, note
  that if $K$ has finite characteristic, then both $\lambda A\otimes B\otimes
  (C+C')$ and $\lambda' A \otimes (B'-B) \otimes C$ can be $0$.
\end{proof}

Since reductions are a special case of flips and plus-transitions are an
inverse reduction followed by a flip, Theorem~\ref{thm:orbit-flip} also holds
for reductions and plus-transitions.  We call the corresponding operations
\emph{orbit reduction} and \emph{orbit plus-transition}.  Note that several
different representatives can allow to flip two orbits in multiple ways.

From Theorem~\ref{thm:orbit-flip} it follows that it is sufficient to apply a
flip, reduction or plus-transition to the representatives of two orbits to
compute the transformation of the whole orbits.  This motivates the following
definition:

\begin{definition}[Flip Graph with Symmetry]
  Let $n\in \N$ and let $G$ be a finite subgroup of the symmetry group of matrix
  multiplication.  Let $V$ be the set of $G$-invariant $n\times n$-matrix
  multiplication schemes.  We define
  \begin{align*}
    &E_{1} = \{ (S,S') \mid S' \text{ is an orbit flip of } S\},\\
    &E_{2} = \{ (S,S') \mid S' \text{ is an orbit reduction of } S\},\\
    &E_{3} = \{ (S,S') \mid S' \text{ is an orbit plus-transition of } S\}.
  \end{align*}

  The graph $(V,E_{1}\cup E_{2} \cup E_{3})$ is called the $(n,n,n)$-flip graph with
  symmetry group $G$.

  We use the notation $(n,n,n)$-flip graph to be consistent with existing
  notation for rectangular matrix multiplication, even though we only consider
  square matrices.
\end{definition}

It has been shown in~\cite{km:fgfm} and in~\cite{aih:afga} that the regular flip
graph is connected if we consider reductions as undirected edges or include
plus-transitions.  This does not hold in the same generality for the flip graph
with symmetry.  If the characteristic of the field is coprime to the group
order, then allowing transformations that replace $A\otimes B\otimes C$ by $k$
copies of $\frac1k A\otimes B\otimes C$ would be sufficient to establish
connectivity of the flip graph.  However, if the group order and the
characteristic of the ground field are not coprime, then certain schemes cannot
be connected in the flip graph.  For example take $K=F_{2}$. Strassen's
algorithm contains the rank-one tensor
\begin{equation*}
  (a_{11}+a_{22})(b_{11}+b_{22})(c_{11}+c_{22}).
\end{equation*}
Any flip, reduction or plus-transition that generates this rank-one tensor would
generate it $6$ times, so it becomes $0$ over $F_{2}$.  We could also allow
orbit flips for smaller orbits, as long as both orbits are generated by the same
group elements, but this does not help here.  Orbits of size two will always
lead to an even coefficient. Orbits of size one or three must be
$C_{3}$-invariant, but a flip can only modify one of the three factors at a
time.  So if we start from any $C_{3}\times \Z_{2}$ invariant $2\times 2$ matrix
multiplication scheme that does not already contain
$(a_{11}+a_{22})(b_{11}+b_{22})(c_{11}+c_{22})$ then there is no path to
Strassen's algorithm in the flip graph.  Thus, we need to give additional
attention to the choice of a suitable starting point.

\section{Search Algorithm}
While the flip graph is too large for exhaustive search, following random walks
allows us to discover matrix multiplication schemes with low rank.  For all our
experiments we used $K=F_{2}$ as a base field.  Although the algorithms we
discover only apply to fields of characteristic $2$ directly, we demonstrate
later how they can be lifted to arbitrary fields.  As symmetry group we use
either $G=C_{3}$ or $G=C_{3}\times \Z_{2}$.

The standard algorithm corresponds to the decomposition of the matrix
multiplication tensor into standard basis elements, i.e. elements of the form
$a_{i,j}\otimes b_{j,k}\otimes c_{k,i}$.  Some of these rank-one tensors are
$C_{3}$-invariant. This is only the case if $i=j=k$.  If $n$ is odd then there
is one $\Z_{2}$ invariant rank-one tensor in the standard algorithm, namely
$a_{(n+1)/2,(n+1)/2}\otimes b_{(n+1)/2,(n+1)/2}\otimes c_{(n+1)/2,(n+1)/2}$, otherwise there
are none.  Such rank-one tensors will not be replaced when we follow a path in
the flip graph.  So, instead of choosing the standard algorithm as a starting
point, we guess some specific rank-one tensors that generate those entries of
the matrix multiplication tensor.  In our experiments we found that considering
$C_{3}$-invariant rank-one tensors whose factors are diagonal matrices works
best.

\begin{definition}\label{def:starting-point}
  Let $n\in \N$ and let $G =C_{3}$ or $G=C_{3}\times \Z_{2}$.  Let $\mathcal{P}$ be a
  partition of $\{1,\ldots,n\}$ such that for every element $P\in \mathcal{P}$, we also
  have $\{n+1-i\mid i\in P\} \in \mathcal{P}$ if $G=C_{3}\times \Z_{2}$.
  Let
  \begin{equation*}
    T=\{(\sum_{i \in P}a_{i,i})\otimes (\sum_{i \in P}b_{i,i}) \otimes (\sum_{i \in P}c_{i,i})\mid P\in \mathcal{P}\}
  \end{equation*}

  Let $S$ be the decomposition of $\M_{n}-\sum_{t\in T} t$ into standard basis
  elements.  Then we call $S\cup T$ the starting point for the diagonal
  partition $\mathcal{P}$.
\end{definition}

A starting point chosen this way can be partitioned into full orbits: 

\begin{lemma}
  Let $G = C_{3}$ or $G=C_{3}\times \Z_{2}$.  Let $S\cup T$ be the starting
  point for a diagonal partition $\mathcal{P}$.  Then $S\cup T$ is a $G$-invariant matrix
  multiplication scheme and the orbit of every element in $S$ has size $|G|$.
\end{lemma}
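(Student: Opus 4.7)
The plan is to separate the two claims — $G$-invariance of $S\cup T$ and full orbit size on $S$ — and to handle the first via the uniqueness of the standard-basis decomposition.

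For $G$-invariance, I would first observe that $T$ itself is $G$-invariant. Each element of $T$ has three identical factors, namely the diagonal matrix with support $P\in\mathcal{P}$, so it is fixed by the cyclic permutation of factors generating $C_{3}$. Conjugation by the backwards identity sends this diagonal matrix to the diagonal matrix with support $\{n+1-i\mid i\in P\}$, which by hypothesis is again a part of $\mathcal{P}$, so $\Z_{2}$ permutes $T$. Since $\M_{n}$ is $G$-invariant (as $G$ sits inside the symmetry group of matrix multiplication), so is $\M_{n}-\sum_{t\in T}t$. Both $C_{3}$ and $\Z_{2}$ act on the standard basis of $(K^{n\times n})^{\otimes 3}$ by permutations with trivial signs: cyclic permutation of factors merely reshuffles the index triple, and conjugation by the backwards identity sends each matrix unit $E_{p,q}$ to the matrix unit $E_{n+1-p,n+1-q}$. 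Uniqueness of the standard-basis decomposition then implies that the set $S$ of surviving terms in the decomposition of $\M_{n}-\sum_{t\in T}t$ is a $G$-invariant set, and hence so is $S\cup T$.

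For the orbit-size claim, I would enumerate $S$ explicitly. After cancellation, $S$ consists of two kinds of rank-one tensors: first, $a_{i,j}\otimes b_{j,k}\otimes c_{k,i}$ for $(i,j,k)\in\range{n}^{3}$ not all equal (the surviving terms of $\M_{n}$ after cancellation against $-\sum T$ on the diagonal $i=j=k$); and second, $-a_{i,i}\otimes b_{j,j}\otimes c_{k,k}$ for $(i,j,k)$ lying in a common part of $\mathcal{P}$ and not all equal. In either case, a non-identity element of $G$ acts on the associated index triple $(i,j,k)$ as a nontrivial cyclic shift, as the involution $(i,j,k)\mapsto(n+1-i,n+1-j,n+1-k)$, or as a composition of the two. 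A short case check shows that in each of the five possibilities the resulting triple equals $(i,j,k)$ only if $i=j=k$, which is precisely the case excluded from $S$. Hence every element of $S$ has trivial stabilizer and its orbit has size $|G|$.

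The main obstacle, in so far as there is one, is the combined-action case: one might worry that composing a cyclic shift with the reversal $i\mapsto n+1-i$ could produce fixed triples outside the excluded diagonal. But the three equations $i=n+1-j$, $j=n+1-k$, $k=n+1-i$ (and the analogous system for the other cyclic shift) force $i=j=k=(n+1)/2$, which lies on the diagonal and is already excluded; so no genuinely new fixed points arise and the lemma goes through.
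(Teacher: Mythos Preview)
Your proof is correct and follows the same route as the paper: $T$ is $G$-invariant by construction, $S$ inherits $G$-invariance from $\M_n-\sum_{t\in T}t$, and the trivial-stabilizer check on the two families of off-diagonal basis elements reduces to the index condition $i=j=k$. You are in fact slightly more careful than the paper in two places---you invoke uniqueness of the standard-basis expansion to transfer invariance from the tensor to the set $S$, and you check all five non-identity elements rather than relying on the paper's somewhat loose claim that it suffices to verify a single generator of the (order-six) cyclic group $C_3\times\Z_2$.
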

\begin{proof}
  $T$ is $G$-invariant by construction.  Since $\M_{n}-\sum_{t \in T}t$ is
  $G$-invariant, $S$ must b $G$-invariant as well.

  To show that $S$ consists of full orbits we have to show that for any $s\in S,
  g\in G\setminus \{e\}$ we have $g\cdot s \neq s$, where $e$ is the identity
  element of $G$.  $S$ consists of two types of elements, those that come from
  $\M_{n}$ are of the form $a_{i,j}\otimes b_{j,k} \otimes c_{k,i}$ and those
  that are correction terms for $T$ are of the form $a_{i,i}\otimes b_{j,j}
  \otimes c_{k,k}$.  Note that in both kinds of elements we cannot have $i=j=k$.
  It is sufficient to show the claim for a generator $g$ of $G$.

  In the case $G=C_{3}$, we can choose $g$ such that
\begin{equation*}
  g\cdot a_{i,j}\otimes
  b_{j,k}\otimes c_{k,i} = a_{j,k}\otimes b_{k,i}\otimes c_{i,j}.
\end{equation*}
In this case $g\cdot a_{i,j}\otimes b_{j,k} \otimes c_{k,i} = a_{i,j}\otimes
b_{j,k}\otimes c_{k,i}$ implies $i=j=k$ as does $g\cdot a_{i,i}\otimes b_{j,j}
\otimes c_{k,k} = a_{i,i}\otimes b_{j,j}\otimes c_{k,k}$.

  If $G=C_{3}\times \Z_{2}$, then we choose $g$ such that
  \begin{equation*}
    g\cdot a_{i,j}\otimes
    b_{j,k}\otimes c_{k,i} = a_{n+1-j,n+1-k}\otimes b_{n+1-k,n+1-i}\otimes c_{n+1-i,n+1-j}.
  \end{equation*}
  So $g\cdot a_{i,j}\otimes b_{j,k}\otimes c_{k,i} = a_{i,j}\otimes
  b_{j,k}\otimes c_{k,i}$ implies $i=n+1-j$, $j=n+1-k$ and $k=n+1-i$, so
  $i=j=k$. The same argument works when $s=a_{i,i}\otimes b_{j,j} \otimes
  c_{k,k}$.
\end{proof}

In our experiments we rarely encountered orbits that became smaller during the
search process.  So for a given starting point $S$ with diagonal partition $\mathcal{P}$
we expect to only discover schemes of rank $k\cdot |G| + |\mathcal{P}|$ where $k$ is the
number of full orbits in the result.  

If we look for a $C_{3} \times \Z_{2}$-invariant $2\times 2$ matrix
multiplication algorithm, like Strassen, then we can choose between the two
diagonal partitions $\{\{1,2\}\}$ and $\{\{1\},\{2\}\}$. Only the first allows
to find Strassen's algorithm.  The corresponding starting point is
\begin{align*}
  \{&(a_{1,1}+a_{2,2})\otimes (b_{1,1}+b_{2,2}) \otimes (c_{1,1}+c_{2,2}),\\
  &C_{3}\times \Z_{2} \cdot a_{1,1}\otimes b_{1,1} \otimes c_{2,2},\\
  &C_{3}\times \Z_{2} \cdot a_{1,1}\otimes b_{1,2} \otimes c_{2,1}\}.
\end{align*}
Here the first element comes from the diagonal partition, the second element are
the correction terms needed to remove the error introduced by the first element
and the third element generates the non-diagonal entries of the $2\times 2$
matrix multiplication tensor.

\begin{figure}
  \includegraphics[scale=0.35]{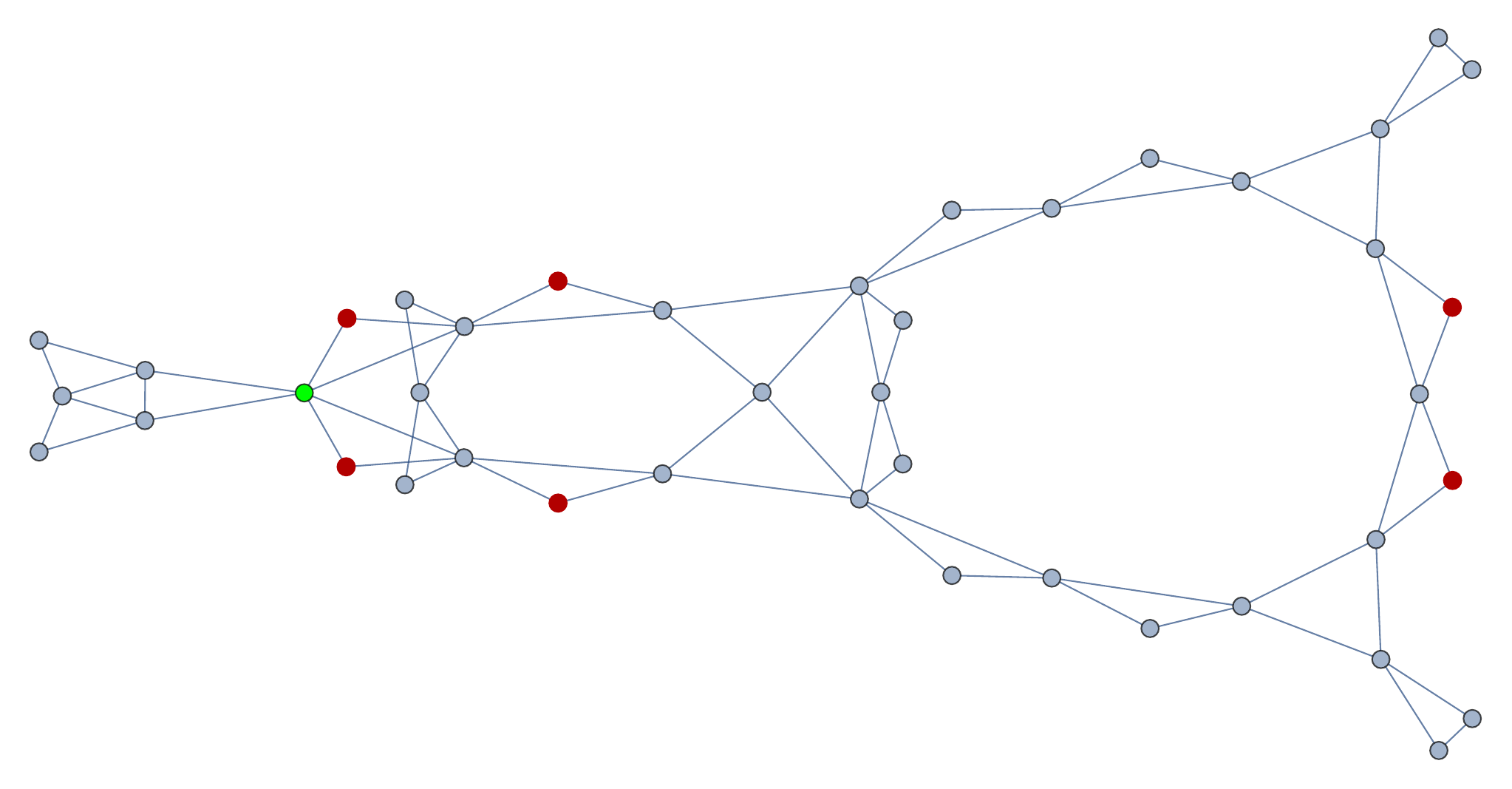}
  \caption{The component of the $C_{3}\times \Z_{2}$-symmetric $(2,2,2)$ flip
    graph that contains the starting point with diagonal partition
    $\{\{1,2\}\}$, marked in green.  Vertices that can be reduced to Strassen's
    algorithm are marked in red. }
  \label{fig:222flip-graph}
\end{figure}

The $2 \times 2$ case is small enough that we can compute all schemes that are
reachable by flips from this starting point.  The corresponding component of the
graph is shown in Figure~\ref{fig:222flip-graph}.  We can observe that the graph
has a reflection symmetry.  This symmetry corresponds to the $S_{3}$ action that
transposes all factors and swaps two of them.  So each scheme form the top half
is equivalent to the corresponding scheme from the bottom part.

The figure shows that the starting point has a neighbor that can be reduced.
Applying the suitable flip gives
\begin{align*}
  \{&(a_{1,1}+a_{2,2})\otimes (b_{1,1}+b_{2,2}) \otimes (c_{1,1}+c_{2,2}),\\
  &C_{3}\times \Z_{2} \cdot a_{1,1}\otimes (b_{1,1}+b_{1,2}) \otimes c_{2,2},\\
  &C_{3}\times \Z_{2} \cdot a_{1,1}\otimes b_{1,2} \otimes (c_{2,1}+c_{2,2})\}.
\end{align*}
It is not immediately clear here that a reduction is possible.  The reduction
becomes apparent if we choose a different representative for the third orbit:
\begin{align*}
  \{&(a_{1,1}+a_{2,2})\otimes (b_{1,1}+b_{2,2}) \otimes (c_{1,1}+c_{2,2}),\\
  &C_{3}\times \Z_{2} \cdot a_{1,1}\otimes (b_{1,1}+b_{1,2}) \otimes c_{2,2},\\
  &C_{3}\times \Z_{2} \cdot a_{2,1}\otimes (b_{1,1}+b_{1,2}) \otimes c_{2,2}\}.
\end{align*}
Since the second and third element share two factors they can be reduced to a
single orbit, yielding Strassen's algorithm over $F_{2}$:
\begin{align*}
  \{&(a_{1,1}+a_{2,2})\otimes (b_{1,1}+b_{2,2}) \otimes (c_{1,1}+c_{2,2}),\\
  &C_{3}\times \Z_{2} \cdot (a_{1,1}+a_{2,1})\otimes (b_{1,1}+b_{1,2}) \otimes c_{2,2}\}.
\end{align*}

For larger cases we use a random search procedure on the flip graph.  To
maximize efficiency in our implementation we choose not to look for reductions
explicitly.  Instead we rely on flips that realize a reduction by transforming
an element to a $0$.  The reason that we do not miss too many reductions this
way is that most neighbors of a reducible scheme will still allow the same
reduction step.

Our search strategy is as follows.  We follow random flips until a reduction is
performed or a certain number of flips $M$ is reached. If we did not encounter a
reduction then we perform a random plus transition and continue the process.  We
keep track of the minimal rank that is found during one run.  If a given limit
$L$ on the number of flips is reached without encountering a new minimal rank we
terminate the search.  Whenever a new minimal rank is found, we reset the limit
to ensure that the search is not terminated because too much time was spent on
higher ranks.  

The exact search routine is described in the following pseudo
code:

 \begin{algorithm}\label{alg:path}
   Input: A symmetry group $G$, a $G$-invariant matrix multiplication scheme
   $S$, a limit $L$ for the path length, a target rank $R$ and a plus-transition parameter $M$.\\
   Output: A $G$-invariant matrix multiplication scheme of rank $R$ or $\bot$.
   \step10 $\l=0, m=M, r=\rank(S)$
   \step20 while $\l<L$ do:
   \step31 If $S$ has no orbit flips then:
   \step42 Return $\bot$.
   \step51 Else:
   \step62 Set $S$ to a random orbit flip of~$S$.
   \step71 If $0 \in S$ then:
   \step82 Remove $0$ from $S$.
   \step{9}2 If $\rank(S) < r$ then:
   \step{10}3 $r=rank(S)$
   \step{11}3 If $r\leq R$ then return $S$.
   \step{12}3 $\l=0$
   \step{13}2 $m=\l+M$
   \step{14}1 if $\l\geq m$ then:
   \step{15}2 Set $S$ to a random orbit plus-transition of~$S$.
   \step{16}2 $m=\l+M$
   \step{17}1 $\l=\l+1$
   \step{18}0 Return $\bot$.
 \end{algorithm}

 \section{Results}

 For $n=5$ we used $G=C_{3}$ and the starting point given by the diagonal
 partition $\{\{1,5\},\{2,4\},\{3\}\}$.  With a flip limit of $10^{8}$ and a
 plus-transition parameter of $5\cdot 10^{4}$, we can find a scheme of rank $93$ within
 few minutes on a laptop.

 For $n=6$ we used $G=C_{3}\times \Z_{2}$ and the starting point for the diagonal
 partition $\{\{1,2\},\{3,4\},\{5,6\}\}$.  With a flip limit of $10^{9}$ and a
 plus-transition parameter of $5\cdot 10^{5}$, we can find a scheme of rank $153$.  The
 search took around half an hour on a machine with $160$ CPU cores, amounting to
 about $80$ hours of CPU time.

 For $n=3$ we can match the upper bound $23$ with $G=C_{3}$ and any diagonal
 partition into two parts.  For $n=4$ we find a $C_{3}\times \Z_{2}$-invariant
 scheme with rank $49$.  However, we are not able to find any symmetric scheme
 with a rank less than $49$, while schemes without symmetry exist with rank
 $47$~\cite{fbh+:dfmm}.

 All schemes we found could be lifted to coefficients in $\Z$ as described in
 Section~\ref{sec:lifting}.  The schemes and the code for the search can
 be downloaded from
 \begin{center}
   \url{https://github.com/jakobmoosbauer/symmetric-flips}.
 \end{center} 
 
\section{Lifting}\label{sec:lifting}
To convert a matrix multiplication scheme over $F_{2}$ to a scheme that works
over arbitrary fields, we apply Hensel lifting~\cite{vg:mca} as also done
in~\cite{km:fgfm}.  We use a modification of the lifting procedure that
preserves the cyclic symmetry of the schemes.  To this end we consider a cyclic
version of the Brent equations~\cite{br:afmm}.  Let
\begin{equation*}
S=\{((\alpha_{i,j}^{(\l)}))\otimes
((\beta_{j,k}^{\l}))\otimes((\gamma_{k,i}^{\l}))\mid \l=1,\ldots ,k\}
\end{equation*}
be a set of rank-one tensors and let $T$ be the rank-one tensors defined by the
diagonal partition.  Remember that the elements of $T$ are $C_{3}$-invariant.
We have that $\bigcup \{C_{3}\cdot s\colon s \in S\}\cup T$ is a $C_{3}$-invariant matrix
multiplication scheme if and only if the equations
\begin{align*}
  &\sum_{\l=1}^{k}\alpha_{i_{1},i_{2}}^{(\l)}\beta_{j_{1},j_{2}}^{(\l)}\gamma_{k_{1},k_{2}}^{(\l)} +
    \alpha_{j_{1},j_{2}}^{(\l)}\beta_{k_{1},k_{2}}^{(\l)}\gamma_{i_{1},i_{2}}^{(\l)} +
    \alpha_{k_{1},k_{2}}^{(\l)}\beta_{i_{1},i_{2}}^{(\l)}\gamma_{j_{1},j_{2}}^{(\l)} 
  = \Big(\M_{n}-\sum_{t\in T}t\Big)_{i_{1},i_{2},j_{1},j_{2},k_{1},k_{2}}
\end{align*}
for $i_{1},i_{2},j_{1},j_{2},k_{1},k_{2}=1,\ldots,n$ are satisfied.  In our
experiments it was not necessary to lift the coefficients in the elements of
$T$.  Therefore, they are considered as constants rather than variables.

A solution to this system modulo $2^{m}$ for some $m\in \N$ can be seen as an
approximation of order $m$ to a solution lying in the $2$-adic integers.  Given
such a solution we can make an ansatz with undetermined coefficients for the
approximation to order $m+1$, plug this ansatz into the equations and reduce mod
$2^{m+1}$.  The result will always be divisible by $2^{m}$, this gives a linear
system over $F_{2}$ for the undetermined coefficients in the ansatz, which can
be solved with linear algebra.  If the system has no solution, then this proves
that there is no refinement of the approximation to order $m+1$.  If the system
has solutions we pick one and proceed to the next order.

There is no guarantee for the approximation to converge to a solution, but in
our experiments all candidates converged to an integer solution within a few
steps.  This is noteworthy, since several of the matrix multiplication schemes
discovered over $F_{2}$ in prior work~\cite{fbh+:dfmm,km:fgfm,aih:afga} can
provably not be lifted.  Moreover, when applying Hensel lifting to schemes
discovered in~\cite{km:fgfm}, only some of the candidates can be lifted to a
solution over $\mathbb{Q}$ and even fewer admit lifting to a solution over $\Z$.  These
findings suggest that liftability is a property that is much more common among
symmetric matrix multiplication schemes.

\section{Identifying Starting Points}\label{sec:starting-points}
Finding a good starting point is mostly a process of trial and error.  Once a
scheme of rank $r$ is found, the new task becomes to find a scheme of rank less
than $r$.  If we keep the same starting point we can only search for schemes of
rank $r-|G|$.  When this search is unsuccessful, the question is whether we are
able to find schemes of rank between $r$ and $r-|G|$.  For every rank we want to
test we also need to test several possible starting points.

However, in the case of $6\times 6$ matrices we were able to use the starting
point for the diagonal partition into a single part to identify a good starting
point for a $3$-part diagonal partition.  We first discovered schemes of rank
$157$ by using the starting point where
\begin{equation}\label{eq:start6x6-157}
\begin{aligned}
  &T= (a_{1,1}+a_{2,2}+a_{3,3}+a_{4,4}+a_{5,5}+a_{6,6}) \\
  &\otimes (b_{1,1}+b_{2,2}+b_{3,3}+b_{4,4}+b_{5,5}+b_{6,6}) \\
  &\otimes (c_{1,1}+c_{2,2}+c_{3,3}+c_{4,4}+c_{5,5}+c_{6,6})\}.
\end{aligned}
\end{equation}
Many of the resulting rank-$157$ schemes contained the following orbit, which
stands out because all factors are diagonal matrices:
\begin{equation}\label{eq:diag6x6}
  G\cdot(a_{1,1}+a_{2,2}) \otimes (b_{1,1}+b_{2,2}+b_{3,3}+b_{4,4}) \otimes (c_{3,3}+c_{4,4}+c_{5,5}+c_{6,6}).
\end{equation}
After expanding this term one can observe that~(\ref{eq:start6x6-157})
and~(\ref{eq:diag6x6}) can be replaced by the three rank-one tensors
\begin{align*}
  &(a_{1,1}+a_{2,2})\otimes (b_{1,1}+b_{2,2})\otimes (c_{1,1}+c_{2,2}),\\
  &(a_{3,3}+a_{4,4})\otimes (b_{3,3}+b_{4,4})\otimes (c_{3,3}+c_{4,4}),\\
  &(a_{5,5}+a_{6,6})\otimes (b_{5,5}+b_{6,6})\otimes (c_{5,5}+c_{6,6})
\end{align*}
yielding a scheme of rank~$153$.  Once we had identified a good starting point,
we were also able to find schemes of rank~$153$ from scratch.

\section{Implementation}

The algorithm was implemented in Python, with an inner solver routine which is
implemented in optimized C++.  Given that our goal is to search schemes for
multiplying $5\times 5$ matrices and larger, the search space is vast, even
taking the symmetry into consideration. Given also that both reductions and
plus-transitions are very rare occurrences, the primary requirement for the
software implementation was that the basic flip operation should be fast as
possible.  To facilitate that requirement, we do not search for nearby
reductions directly.  Instead we rely on random flips that produce a zero
element, which essentially are reductions.  While this might lead to not
following some potential reductions, it allows us to explore much longer paths
in the flip graph.

To the solver we only pass the rank-one tensors that have full orbits.  For
every rank-one tensor $A\otimes B \otimes C$ only the matrix $A$ needs to be
stored and manipulated, the other factors can be reconstructed with the cyclic
symmetry.  This allows us to perform three flips at the cost of one.  Our
implementation supports matrices of size up to $8\times 8$, which can be stored
effectively in a single 64-bit integer.

We use the following data structures to keep track of the available flips at any
time:
\begin{itemize}
\item A list that stores every unique factor matrix $A$ and in which rank-one
  tensors they appear.  This needs to support constant time insertion and
  deletion and is implemented using a hash map. 
\item A list of those unique factor matrices that occur twice or more.  This
  needs to support constant time insertion, deletion and random sampling and is
  implemented using a hash map and list combination.
\end{itemize}

We found that for this purpose a custom hashed data structure significantly
outperforms those provided by the C++ standard library.

For each flip operation, we need to choose a flip at random from those
available, manipulate the corresponding elements, depending on the symmetry
group and then update the list of available flips.  These operations are very
efficient and speeds of between 1 and 6 billion flips per minute were achieved,
depending on the input and the hardware used.

\section{Conclusion}
We introduced a flip graph algorithm with symmetry constraints, significantly
reducing the size of the search space compared to former flip graph approaches.
By forcing invariance under the groups $C_{3}$ or $C_{3}\times \Z_{2}$ our
method not only facilitates the search process but also enhances the ability to
lift schemes from $F_{2}$ to $\Z$.  This approach led to the discovery of new
schemes for $5\times 5$ and $6\times 6$ matrices. We improve the previously best
known bounds $97$~\cite{km:fgfm} and $160$~\cite{sm:tbca} to $93$ and $153$.  A
further improvement on the $5\times 5$ case by two multiplication steps or the
$6\times 6$ case by one multiplication step would provide an algorithm that is
asymptotically faster than Strassen's algorithm, as $\log_{5}91=2.803$ and
$\log_{6}152=2.804$.

  \par\medskip\noindent\textbf{Acknowledgment.}
  The authors acknowledge the use of the Batch Compute System in the Department
  of Computer Science at the University of Warwick, and associated support
  services, in the completion of this work.  We thank Christian Ikenmeyer and
  Manuel Kauers for helpful discussions on the topic of this paper.
  Additionally, we thank Manuel Kauers for providing access to high-memory
  computing facilities at the Institute for Algebra at the Johannes Kepler
  University.
 
 \bibliographystyle{plain}
 \bibliography{bibliography}
 
\end{document}